\documentclass[12pt,reqno]{amsart}

\newcommand\version{September 22, 2008}


\usepackage{amsmath,amsfonts,amsthm,amssymb,amsxtra}



\setlength{\voffset}{-.7truein}
\setlength{\textheight}{8.8truein}
\setlength{\textwidth}{6.05truein}
\setlength{\hoffset}{-.7truein}


\newtheorem{theorem}{Theorem}[section]

\newtheorem{lemma}[theorem]{Lemma}

\theoremstyle{definition}

\newtheorem{example}[theorem]{Example}

\theoremstyle{remark}

\newtheorem{remark}[theorem]{Remark}


\numberwithin{equation}{section}


\newcommand{\C}{\mathbb{C}}

\renewcommand{\epsilon}{\varepsilon}

\newcommand{\HLT}{\mathrm{HLT}}
\newcommand{\loc}{{\rm loc}}

\newcommand{\R}{\mathbb{R}}

\newcommand{\Sph}{\mathbb{S}}

\DeclareMathOperator{\tr}{tr}


\begin{document}

\title[Hardy-Lieb-Thirring inequalities --- \version]{A simple proof of Hardy-Lieb-Thirring inequalities}

\author{Rupert L. Frank}
\address{Rupert L. Frank, Department of Mathematics, Princeton University, Princeton, NJ 08544, USA} 
\email{rlfrank@math.princeton.edu}

\begin{abstract}
 We give a short and unified proof of Hardy-Lieb-Thirring inequalities for moments of eigenvalues of fractional Schr\"odinger operators. The proof covers the optimal parameter range. It is based on a recent inequality by Solovej, S\o rensen, and Spitzer. Moreover, we prove that any non-magnetic Lieb-Thirring inequality implies a magnetic Lieb-Thirring inequality (with possibly a larger constant).
\end{abstract}

\thanks{\copyright\, 2008 by the author. This paper may be reproduced, in its entirety, for non-commercial purposes.}

\maketitle


\section{Introduction and main result}

This paper is concerned with estimates on moments of negative eigenvalues of Schr\"odinger operators $ (-\Delta)^s -\mathcal C_{s,d} |x|^{-2s} - V$ in $L_2(\R^d)$ in terms of integrals of the potential $V$. Here 
\begin{equation} \label{eq:csd}
     \mathcal C_{s,d} := 2^{2s} \frac{\Gamma((d+2s)/4)^2}{\Gamma((d-2s)/4)^2}
\end{equation}
is the sharp constant in the Hardy inequality
\begin{equation}\label{eq:hardy}
     \int_{\R^d} |p|^{2s} |\hat u(p)|^2 \,dp
\geq \mathcal C_{s,d} \int_{\R^d} |x|^{-2s} |u(x)|^2 \,dx\,, 
\qquad u\in C_0^\infty(\R^d)\,,
\end{equation}
which is valid for $0<s<d/2$ \cite{He} and we write $\hat u(p) := (2\pi)^{-d/2} \int_{\R^d} u(p) e^{-ip\cdot x}\,dx$ for the Fourier transform of $u$. In \cite{FrLiSe1} we have shown that for any $\gamma>0$, $0<s\leq 1$ and $0<s<d/2$ one has
\begin{equation}\label{eq:hltintro}
\tr\left((-\Delta)^s -\mathcal C_{s,d} |x|^{-2s} - V\right)_-^\gamma
\leq L_{\gamma,d,s}^{\HLT} \int_{\R^d} V(x)_+^{\gamma+d/2s} \,dx
\end{equation}
with a constant $L_{\gamma,d,s}^{\HLT}$ independent of $V$. Here and in the following, $t_\pm:=\max\{\pm t,0\}$ denotes the positve and negative parts of a real number or a self-adjoint operator $t$. The case $s=1$ in \eqref{eq:hltintro} has been shown earlier in \cite{EkFr}. We refer to \eqref{eq:hltintro} as \emph{Hardy-Lieb-Thirring inequality} since it is (up to the value of the constant) an improvement of the Lieb-Thirring inequality \cite{LiTh}
\begin{equation}\label{eq:lt}
\tr\left((-\Delta)^s - V\right)_-^\gamma
\leq L_{\gamma,d,s} \int_{\R^d} V(x)_+^{\gamma+d/2s} \,dx \,.
\end{equation}
It should be pointed out that if $0<s<d/2$, then \eqref{eq:lt} is valid even for $\gamma=0$ (as first shown by Cwikel, Lieb, and Rozenblum) while \eqref{eq:hltintro} is not. We refer to the surveys \cite{LaWe,Hu} for background and references concerning \eqref{eq:lt}.

The original motivation for \eqref{eq:lt} came from the problem of stability of non-relati\-vis\-tic matter \cite{LiSe}. Likewise, our motivation for \eqref{eq:hltintro} was stability of \emph{relativistic} matter in \emph{magnetic fields}. For this problem it is crucial that \eqref{eq:hltintro} continues to holds if $(-\Delta)^s$ is replaced by $(D-A)^{2s}$ with a magnetic vector potential $A\in L_{2,\loc}(\R^d,\R^d)$, and that the constant can be chosen independently of $A$. Here, as usual, $D=-i\nabla$ and the operator $(D-A)^{2s}:=((D-A)^2)^s$ is defined using the spectral theorem. Using the magnetic version of \eqref{eq:hltintro} we could prove stability of relativistic matter in magnetic fields up to and including the critical value of the nuclear charge $\alpha Z=2/\pi=\mathcal C_{1/2,3}$; see \cite{FrLiSe1} and also \cite{FrLiSe2}.

The purpose of this paper is fourfold.
\begin{enumerate}
 \item  We will give a new, much simpler proof of \eqref{eq:hltintro}. While the method in \cite{FrLiSe1} relied on rather involved relations between Sobolev inequalities and decay estimates on heat kernels, the present proof uses nothing more than \eqref{eq:lt} (with $\gamma=0$ and with $s$ replaced by some $t<s$) and the generalization of a powerful (though elementary to prove) new inequality by Solovej, S\o rensen and Spitzer \cite{SoSoSp}.
\item We will extend \eqref{eq:hltintro} to its optimal parameter range $0<s<d/2$. For $d\geq 3$ and $1<s<d/2$ this is a new result, even for integer values of $s$ when the operator is local. This result can not be attained with the method of \cite{FrLiSe1}, since positivity properties of the heat kernel break down for $s>1$.
\item Though our new proof of \eqref{eq:hltintro} does \emph{not} work in the presence of a magnetic field, we shall prove a new operator-theoretic result, which says that any non-magnetic Lieb-Thirring inequality implies a magnetic Lieb-Thirring inequality (with possibly a different constant). This recovers, in particular, that \eqref{eq:hltintro} holds if $(-\Delta)^s$ is replaced by $(D-A)^{2s}$ and $0<s\leq 1$. (The reason for the restriction $s\leq 1$ at this point is that we need a diamagnetic inequality.) Another application of this result concerns the recent inequality in \cite{KoVuWe} corresponding to the endpoint $\gamma=0$ of \eqref{eq:lt} with $s=1$, $d=2$ .
\item We show that an analog of inequality \eqref{eq:hltintro} for $s=1/2$, $d=3$ holds in a model for pseudo-relativistic electrons that includes spin. The difficulty here is that the potential energy is non-local. This new estimate simplifies some of the proofs in \cite{FrSiWa} and will be, we believe, a crucial ingredient in the proof of stability of matter in this model.
\end{enumerate}

Here is the precise statement of our result.

\begin{theorem}\label{hlt}
 Let $d\geq 1$, $0<s<d/2$ and $\gamma>0$. Then there is a constant $L_{\gamma,d,s}^\HLT$ such that
\begin{equation}\label{eq:hlt}
 \tr\left((-\Delta)^s -\mathcal C_{s,d} |x|^{-2s} - V\right)_-^\gamma
\leq L_{\gamma,d,s}^\HLT \int_{\R^d} V(x)_+^{\gamma+d/2s} \,dx \,.
\end{equation}
If $d\geq 2$, $0<s\leq 1$ and $(-\Delta)^s$ is replaced by $(D-A)^{2s}$ for some $A\in L_{2,\loc}(\R^d,\R^d)$, then \eqref{eq:hlt} remains valid if $L_{\gamma,d,s}^\HLT$ is replaced by $L_{\gamma,d,s}^\HLT \,(e/p)^p \,\Gamma(p+1)$ with $p=\gamma+d/2s$.
\end{theorem}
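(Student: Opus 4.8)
The plan is to derive the Hardy–Lieb–Thirring bound \eqref{eq:hlt} from the endpoint ($\gamma=0$) Lieb–Thirring/CLR inequality \eqref{eq:lt} at a lower exponent $t<s$, together with the inequality of Solovej, Sørensen, and Spitzer. First I would fix a parameter $t$ with $0<t<s$ (this is possible precisely because $s<d/2$, and we also need $t<d/2$), and recall that the Solovej–Sørensen–Spitzer inequality gives a pointwise comparison of the form $(-\Delta)^s-\mathcal C_{s,d}|x|^{-2s}\geq c\,(-\Delta)^t\,W$ in quadratic-form sense, where $W=W(x)$ is (up to constants) $|x|^{2(t-s)}$; the point is that the Hardy operator on the left, which fails to be bounded below by any multiple of $(-\Delta)^s$ because of the sharp constant, is nonetheless bounded below by a positive multiple of the non-critical operator $(-\Delta)^t$ times a weight that is itself a power of $|x|$. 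Then for the operator $H=(-\Delta)^s-\mathcal C_{s,d}|x|^{-2s}-V$ I would write, for any $\kappa\in(0,1)$,
\[
  H \geq (1-\kappa)\left((-\Delta)^s-\mathcal C_{s,d}|x|^{-2s}\right) + \kappa\left((-\Delta)^s-\mathcal C_{s,d}|x|^{-2s}\right) - V \geq c\,\kappa\,(-\Delta)^t W - V,
\]
so that $H_-^\gamma \leq (c\kappa\,(-\Delta)^t W - V)_-^\gamma$ and it suffices to bound the negative eigenvalue moments of $c\kappa(-\Delta)^t W - V$, i.e. of the weighted operator $(-\Delta)^t - (c\kappa)^{-1}W^{-1}V$ after multiplying through by $W^{-1/2}$ on both sides (a unitarily-implemented change of measure, or simply conjugation since $W>0$ a.e.).

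The key step is then to feed this into the Birman–Schwinger principle combined with the $\gamma=0$ bound \eqref{eq:lt} for $(-\Delta)^t$: for each $e>0$ the number of eigenvalues of $(-\Delta)^t W - V$ below $-e$ equals the number of eigenvalues of the Birman–Schwinger operator $K_e = (V-e W)_+^{1/2}\big((-\Delta)^t + e W\big)^{-1}(V-eW)_+^{1/2}$ above $1$ (heuristically; one must be slightly careful with the $eW$ term, and it is cleaner to compare with $(-\Delta)^t-(V-eW)_+$ and drop $eW$ on the resolvent side using $W>0$), and then integrate the layer-cake identity $\tr H_-^\gamma=\gamma\int_0^\infty e^{\gamma-1}\,N(e)\,de$ against the CLR bound $N(e)\leq L_{0,d,t}\int(V(x)-eW(x))_+^{(d/2t)}\,dx$. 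Computing $\gamma\int_0^\infty e^{\gamma-1}\int_{\R^d}(V(x)-eW(x))_+^{d/2t}\,dx\,de$ by interchanging the integrals and evaluating the inner $e$-integral produces a Beta-function factor and, crucially, a power of $W$ that reads $W(x)^{-\gamma}=|x|^{2\gamma(s-t)}$ multiplying $V(x)_+^{\gamma+d/2t}$; one then checks that $\gamma+d/2t$ together with this weight combine — via the specific choice $W\sim|x|^{2(t-s)}$ — to give exactly $V(x)_+^{\gamma+d/2s}$ with no residual weight. (This homogeneity check is the arithmetic heart of the argument: the exponents must balance so that the $|x|$-power cancels, and they do precisely because $W$'s exponent was chosen as $2(t-s)$.) The constant $L^\HLT_{\gamma,d,s}$ then emerges as $L_{0,d,t}$ times the Beta factor times powers of $c$, $\kappa$; optimizing over $\kappa\in(0,1)$ (or simply taking $\kappa=1$) and over admissible $t<s$ gives the stated finite constant.

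For the magnetic statement, the plan is to invoke the operator-theoretic reduction announced as item (3) in the introduction: once \eqref{eq:hlt} is known for $(-\Delta)^s$, the abstract result (proved later in the paper) upgrades any non-magnetic Lieb–Thirring-type inequality with constant $L$ to its magnetic counterpart $(D-A)^{2s}$ with constant $L\,(e/p)^p\,\Gamma(p+1)$, $p=\gamma+d/2s$; the hypotheses needed are a diamagnetic inequality for $(D-A)^2$, which holds for $A\in L_{2,\loc}$ and forces the restriction $0<s\leq1$ so that $t\mapsto t^s$ is operator monotone and the $2s$-th power of the magnetic momentum is controlled. The restriction $d\geq2$ is the natural one for the magnetic setting (for $d=1$ the vector potential is gauge-trivial).

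The main obstacle I anticipate is not any single estimate but the bookkeeping of the two-parameter family: one must verify simultaneously that (i) $t$ can be chosen strictly between $0$ and $s$ with $t<d/2$ so that the CLR bound at exponent $t$ is available (this is where $s<d/2$ is used in full strength, and the case $d=1$, where $s<1/2$ forces $t<1/2<d/2$ automatically, must be checked not to degenerate), (ii) the Solovej–Sørensen–Spitzer comparison is valid in form sense on the correct form domain and the conjugation by $W^{1/2}$ is legitimate, and (iii) the weight $W=|x|^{2(t-s)}$ is locally integrable against the relevant powers so that all the $e$- and $x$-integrals converge; the dangerous point is the behavior at $x=0$ (where $W\to\infty$ since $t<s$) and at infinity, and one should confirm that the final integrand $V_+^{\gamma+d/2s}$ has no leftover weight, which is exactly the homogeneity identity flagged above. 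If that identity holds — and the choice of exponent $2(t-s)$ is engineered to make it hold — the rest is a short computation.
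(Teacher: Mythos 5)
There is a genuine gap, and it sits exactly at what you call the ``arithmetic heart'' of the argument. First, the comparison inequality you attribute to Solovej--S{\o}rensen--Spitzer is not the one proved in the paper (nor in their work): Theorem \ref{hardyrem}, in its operator form \eqref{eq:hardyrem}, states $(-\Delta)^s-\mathcal C_{s,d}|x|^{-2s}\geq K_{d,s,t}\,l^{-2(s-t)}(-\Delta)^t-l^{-2s}$ for every $l>0$, i.e.\ a lower bound by a multiple of $(-\Delta)^t$ minus a \emph{constant}, with a tunable trade-off between the coefficient and the constant. It is not a lower bound by a weighted operator $c\,(-\Delta)^t W$ with $W=|x|^{2(t-s)}$; that statement is established nowhere and you would have to prove it separately. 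Second, and more seriously, even granting such a weighted bound, your exponent bookkeeping does not close. Applying CLR at exponent $t$ to the conjugated operator $(-\Delta)^t-(c\kappa)^{-1}W^{-1/2}(V-e)W^{-1/2}$ and integrating the layer-cake identity in $e$ yields a constant times $\int V_+^{\gamma+d/2t}\,|x|^{\ast}\,dx$: the power of $V$ comes out as $\gamma+d/2t$, not $\gamma+d/2s$, and a residual $|x|$-weight survives. An $x$-dependent weight cannot convert the exponent of an arbitrary $V$; your ``homogeneity identity'' only checks overall scaling, not the pointwise form of the integrand.

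The correct mechanism is to make the coefficient of $(-\Delta)^t$ depend on the spectral level $\tau$ rather than on $x$: choose $l^{-2s}=\sigma\tau$ in \eqref{eq:hardyrem}, so that $N(-\tau,(-\Delta)^s-\mathcal C_{s,d}|x|^{-2s}-V)\leq N\bigl(0,K_t(\sigma\tau)^{(s-t)/s}(-\Delta)^t-(V-(1-\sigma)\tau)\bigr)$. The CLR bound \eqref{eq:lt} with $\gamma=0$ and exponent $t$ then gives $L_tK_t^{-d/2t}(\sigma\tau)^{-d(s-t)/2st}\int(V-(1-\sigma)\tau)_+^{d/2t}\,dx$, and the subsequent $\tau$-integration $\gamma\int_0^\infty\tau^{\gamma-1-d(s-t)/2st}(V-(1-\sigma)\tau)_+^{d/2t}\,d\tau$ produces exactly $V_+^{\gamma+d/2s}$ because $\gamma-d(s-t)/2st+d/2t=\gamma+d/2s$. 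This also reveals a constraint absent from your sketch: the $\tau$-integral converges only if $t>ds/(2\gamma s+d)$, so $t$ must be taken close enough to $s$ (not merely $0<t<s$ with $t<d/2$). Your treatment of the magnetic case --- deferring to Example \ref{diamagex} and Theorem \ref{diamagneg} --- is correct and is what the paper does, except that the restriction $s\leq 1$ enters through complete monotonicity of $\lambda\mapsto e^{-\lambda^s}$ (Bernstein) to get the diamagnetic domination of semigroups, not through operator monotonicity of $t\mapsto t^s$.
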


The crucial ingredient in our proof of \eqref{eq:hlt} is the following lower bound for the quadratic form
$$
h_s[u] := \int_{\R^d} |p|^{2s} |\hat u(p)|^2 \,dp - \mathcal C_{s,d} \int_{\R^d} |x|^{-2s} |u(x)|^2 \,dx
$$
of the operator $ (-\Delta)^s -\mathcal C_{s,d} |x|^{-2s}$.

\begin{theorem}\label{hardyrem}
 Let $0<t<s<d/2$. Then there exists a constant $\kappa_{d,s,t}>0$ such that for all $u\in C_0^\infty(\R^d)$ one has
\begin{equation}\label{eq:hardyremscal}
h_s[u]^\theta \|u\|^{2(1-\theta)} \geq \kappa_{d,s,t} \|(-\Delta)^{t/2} u\|^2 \,,
\qquad \theta:=t/s\,.
\end{equation}
\end{theorem}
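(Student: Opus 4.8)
The plan is to combine the sharp Hardy inequality \eqref{eq:hardy} with the ground-state-representation philosophy and the scaling structure of $h_s$. The key point is that $h_s$ is a nonnegative quadratic form (by \eqref{eq:hardy}) that is homogeneous of degree $d-2s$ under the scaling $u\mapsto u(\lambda\,\cdot)$, exactly like $\|(-\Delta)^{t/2}u\|^2$ would be homogeneous of degree $d-2t$; and $\|u\|^2$ is homogeneous of degree $d$. A quick check shows that the exponents in \eqref{eq:hardyremscal} are precisely the ones forced by scaling invariance: with $\theta=t/s$, the left side scales as $\lambda^{\theta(d-2s)}\lambda^{(1-\theta)(d\cdot 0)}\cdots$ — more carefully, writing $u_\lambda(x)=u(\lambda x)$ one has $h_s[u_\lambda]=\lambda^{2s-d}h_s[u]$, $\|u_\lambda\|^2=\lambda^{-d}\|u\|^2$, $\|(-\Delta)^{t/2}u_\lambda\|^2=\lambda^{2t-d}\|(-\Delta)^{t/2}u\|^2$, and $\theta(2s-d)+(1-\theta)(-d)=2t-d$ since $\theta s = t$. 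So the inequality is scale-invariant, which is the signal that it should be provable by a compactness/variational argument or, better, by interpolation.

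The cleanest route I would take is \emph{interpolation between the Hardy-deficit form and an $L_2$ bound using the spectral resolution of $(-\Delta)^s-\mathcal C_{s,d}|x|^{-2s}$}. Concretely: it is known (and follows from the sharp Hardy inequality together with a Hardy inequality "with remainder term") that the operator $H_s:=(-\Delta)^s-\mathcal C_{s,d}|x|^{-2s}$, while not bounded below away from zero, satisfies a \emph{fractional Hardy inequality with a Sobolev remainder}: there is $c>0$ with $h_s[u]\ge c\,\|(-\Delta)^{t/2}u\|^2$ for... no — that fails by scaling unless $t=s$. So instead I would argue as follows. Fix $u$ with $\|u\|=1$ (the general case by the scaling above, or by homogeneity-free normalization). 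Then \eqref{eq:hardyremscal} reads $h_s[u]\ge \kappa^{1/\theta}\,\|(-\Delta)^{t/2}u\|^{2/\theta}$. Using $\|(-\Delta)^{t/2}u\|^2=\int |p|^{2t}|\hat u|^2\le \left(\int |p|^{2s}|\hat u|^2\right)^{\theta}\left(\int|\hat u|^2\right)^{1-\theta}$ by Hölder with exponents $1/\theta$, $1/(1-\theta)$, we get $\|(-\Delta)^{t/2}u\|^{2/\theta}\le \int|p|^{2s}|\hat u|^2$. Hence it suffices to show
\begin{equation*}
h_s[u]\ \ge\ \kappa^{1/\theta}\int_{\R^d}|p|^{2s}|\hat u(p)|^2\,dp,
\end{equation*}
i.e. that the Hardy deficit controls a fixed fraction of the kinetic energy. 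But that is false as stated (it would say the Hardy inequality is not sharp). The resolution is that one does \emph{not} need the full kinetic energy on the right: one only needs $\|(-\Delta)^{t/2}u\|^2$ for $t<s$ \emph{strictly}, and here the strictness is essential. So the correct interpolation is a \emph{three-term} one: bound $\|(-\Delta)^{t/2}u\|^2$ by interpolating $|p|^{2t}$ between $|p|^{2s}$ (controlled, up to the sharp constant, only together with the $|x|^{-2s}$ term) and a lower power, and absorb the borderline loss into the strict gap $s-t$.

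Concretely, here is the route I would actually push through. By a ground-state substitution $u(x)=|x|^{-(d-2s)/2}v(x)$, the form $h_s[u]$ becomes (after the Frank–Lieb–Seiringer / Herbst computation) a manifestly nonnegative expression; in the $s=1$ case this is the classical $\int |x|^{-(d-2)}|\nabla v|^2$, and for general $s$ there is an analogous Besov-type expression. This exhibits $h_s$ as (equivalent to) a weighted homogeneous Sobolev form of order $s$ in the weighted space. In that language, \eqref{eq:hardyremscal} is a \emph{weighted} fractional Sobolev/Gagliardo–Nirenberg inequality: the homogeneous $\dot H^s$-seminorm with weight $|x|^{-(d-2s)}$, interpolated against the unweighted $L_2$ norm, dominates the unweighted $\dot H^t$-seminorm. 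I would prove this by:
(i) reducing to $\|u\|=1$ and to $u$ real by the scaling identity above;
(ii) splitting in Fourier space into $|p|\le R$ and $|p|\ge R$, optimizing $R$ at the end;
(iii) on $|p|\ge R$: $\int_{|p|\ge R}|p|^{2t}|\hat u|^2\le R^{2t-2s}\int|p|^{2s}|\hat u|^2$, and crucially bounding $\int|p|^{2s}|\hat u|^2$ not by $h_s[u]$ (impossible) but noting $\int|p|^{2s}|\hat u|^2 = h_s[u]+\mathcal C_{s,d}\int|x|^{-2s}|u|^2$ and then using a \emph{subcritical} Hardy inequality (with exponent $2t<2s$, hence non-sharp and with a genuine Sobolev bound) to control $\int|x|^{-2s}|u|^2$... this is where it gets delicate;
(iv) on $|p|\le R$: $\int_{|p|\le R}|p|^{2t}|\hat u|^2\le R^{2t}\|u\|^2=R^{2t}$.
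The honest statement is that step (iii) must be replaced by the argument of Frank–Lieb–Seiringer: one uses that $h_s[u]\ge 0$ together with a \emph{weighted} Hardy–Sobolev inequality in the ground-state representation to produce $\|(-\Delta)^{t/2}u\|^2\lesssim h_s[u]$ directly for the high frequencies, with the constant blowing up as $t\uparrow s$ — which is fine since $t<s$ is fixed.

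\textbf{Main obstacle.} The crux is exactly that one cannot bound the kinetic energy $\int|p|^{2s}|\hat u|^2$ by the Hardy deficit $h_s[u]$ — this is the statement that Hardy's inequality is sharp and hence \emph{not} improvable by an $\dot H^s$ remainder. Every naive interpolation argument runs into this wall. The real content of the theorem is that one \emph{can} bound a slightly weaker quantity, $\|(-\Delta)^{t/2}u\|^2$ with $t<s$, and the proof must exploit the strict inequality $t<s$ — most naturally via the ground-state representation $u=|x|^{-(d-2s)/2}v$, which converts $h_s[u]$ into a weighted homogeneous Sobolev form with \emph{no} zero mode in the relevant class, and then invoking a weighted Hardy–Sobolev/Gagliardo–Nirenberg inequality whose constant is allowed to degenerate as $t\to s$. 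I expect the author's actual proof to do essentially this, possibly shortcutting via the Solovej–Sørensen–Spitzer inequality advertised in the introduction, which presumably packages precisely this weighted estimate into a clean inequality that can be quoted.
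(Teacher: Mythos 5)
Your proposal correctly diagnoses the central difficulty --- that $h_s[u]$ cannot control the full kinetic energy $\int|p|^{2s}|\hat u|^2\,dp$ because Hardy's inequality is sharp, so the strict gap $t<s$ must be exploited --- and your scaling check of the exponents is right. But the proof itself is never carried out: your step (iii) is exactly the missing step, and you acknowledge as much by saying it "gets delicate" and must be "replaced by the argument of Frank--Lieb--Seiringer" or by quoting the Solovej--S\o rensen--Spitzer inequality. The latter \emph{is} the theorem (in the case $d=3$, $s=1/2$), so quoting it is circular, and the ground-state-representation route you sketch is precisely the harder machinery of \cite{FrLiSe1} that this paper is written to avoid (and which, as the introduction notes, breaks down for $s>1$ because positivity of the heat kernel fails). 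As written, the proposal is a correct problem analysis with the key estimate left unproved.

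The paper's actual argument is elementary and entirely different from both of your suggested routes. One writes the Hardy term via Plancherel as the double integral $\iint \overline{\hat u(p)}\hat u(q)|p-q|^{-(d-2s)}\,dp\,dq$ and applies the Cauchy--Schwarz (Abel/Kovalenko--Perel'muter--Semenov) trick with an auxiliary function $h$, giving the symbol bound $t_h(p)=h(p)^{-1}\int h(q)|p-q|^{-(d-2s)}\,dq$ for the Hardy term. The whole content is then the choice $h(p)=(|p|^{(d+2s)/2}+l^{\beta-(d+2s)/2}|p|^{\beta})^{-1}$ with $\beta=2t+(d-2s)/2$: expanding $(a+b)^{-1}\le a^{-1}-a^{-2}b+a^{-3}b^{2}$ and evaluating the resulting convolutions explicitly via \eqref{eq:convol} yields $t_h(p)\le \Psi_{s,d}((d+2s)/2)|p|^{2s}-Al^{-2(s-t)}|p|^{2t}+Bl^{-2s}$, where the sign of the crucial middle coefficient comes from the strict monotonicity of $\Psi_{s,d}$ (Lemma \ref{incr}) --- this is exactly where $t<s$ enters. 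That pointwise symbol inequality is equivalent to the operator inequality \eqref{eq:hardyrem} and hence to \eqref{eq:hardyremscal}. To complete your proof you would need to supply this trial function and the $\Psi_{s,d}$ computation (or some genuine substitute); without it the argument does not close.
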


In the special case $d=3$ and $s=1/2$ this is a recent result by Solovej, S\o rensen and Spitzer \cite[Thm. 11]{SoSoSp}. The results reported here are motivated by their work. Below we shall show that their proof extends to arbitrary $0<s<d/2$.

Our original proof of \eqref{eq:hlt} in \cite{FrLiSe1} for $0<s\leq 1$ relied on the Gagliardo-Nirenberg-type inequality
\begin{equation}\label{eq:hs}
h_s[u]^\theta \|u\|^{2(1-\theta)} \geq \sigma_{d,s,q} \|u\|_q^2 \,,
\qquad \theta:=\frac ds\left(\frac12-\frac1q\right) \,,
\end{equation}
for $2<q<2s/(d-2s)$. This is weaker than \eqref{eq:hardyremscal} in view of the Sobolev inequality \cite[Thms. 4.3 and 8.3]{LiLo}
$$
\|(-\Delta)^{t/2} u\|^2 \geq S_{d,t} \|u\|_q^2 \,,
\qquad q=\frac{2d}{d-2t} \,.
$$
What makes \eqref{eq:hardyremscal} much easier to prove than \eqref{eq:hs} is that it is a \emph{linear} inequality, that is, all norms are taken in $L_2(\R^d)$. Indeed, \eqref{eq:hardyremscal} is easily seen to be equivalent to the operator inequality
\begin{equation}\label{eq:hardyrem}
(-\Delta)^{s} - \mathcal C_{s,d} |x|^{-2s} \geq K_{d,s,t} l^{-2(s-t)}(-\Delta)^t - l^{-2s} \,,
\qquad l>0\,,
\end{equation}
where $K_{d,s,t}= \left( s^{-s} t^t (s-t)^{s-t} \right)^{1/s} \kappa_{d,s,t}$, and this is the way we shall prove it in the next section.

\textbf{Acknowledgements.} The author would like to thank E. Lieb and R. Seiringer for very fruitful discussions, as well as J. P. Solovej, T. {\O}stergaard S{\o}rensen and W.~Spitzer for useful correspondence. Support through DAAD grant D/06/49117 and U.S. National Science Foundation grant PHY 06 52854 is gratefully acknowledged.


\section{Proof of Theorem \ref{hardyrem}}

Throughout this section we assume that $0<s<d/2$. Recall that for $0<\alpha<d$ the Fourier transform of $|x|^{-d+\alpha}$ is given by
\begin{equation}\label{ft1x}
     b_{d-\alpha} \left(|\cdot|^{-d+\alpha}\right)^\wedge (p)
     = b_{\alpha} |p|^{-\alpha},
     \qquad b_\alpha := 2^{\alpha/2} \Gamma(\alpha/2)\,;
\end{equation}
see, e.g., \cite[Thm.~5.9]{LiLo}, where another convention for the Fourier transform is used, however. This implies that for $2s<\alpha<d$ one has
\begin{equation}\label{eq:convol}
\int_{\R^d} \frac1{|p-q|^{d-2s} |q|^{\alpha}} \,dq = \Psi_{s,d}(\alpha) \frac1{|p|^{\alpha-2s}} \,,
\end{equation}
where
\begin{equation*}\label{eq:psi}
 \Psi_{s,d}(\alpha)
:= (2\pi)^{d/2} \frac{b_{2s} \, b_{\alpha-2s} \, b_{d-\alpha}}{b_{d-2s} \, b_{d-\alpha+2s} \,  b_{\alpha}}
= \frac{\pi^{d/2} \,\Gamma(s)}{\Gamma((d-2s)/2)} \ \frac{\Gamma((\alpha-2s)/2)\,\Gamma((d-\alpha)/2)}{\Gamma((d-\alpha+2s)/2)\,\Gamma(\alpha/2)}
 \,.
\end{equation*}
We shall need the following facts about $\Psi_{s,d}(\alpha)$ as a function of $\alpha\in(2s,d)$.

\begin{lemma}\label{incr}
 $\Psi_{s,d}$ is an even function with respect to $\alpha=(d+2s)/2$ and one has
\begin{equation}\label{eq:psimin}
\Psi_{s,d}((d+2s)/2)= (2\pi)^{d/2} \frac{b_{2s}}{b_{d-2s}}\mathcal C_{s,d}^{-1}
\end{equation}
with $\mathcal C_{s,d}$ from \eqref{eq:csd}. Moreover, $\Psi_{s,d}$ is strictly decreasing on $(2s, (d+2s)/2)$ and strictly increasing on $((d+2s)/2,d)$.
\end{lemma}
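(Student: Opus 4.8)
The strategy is to work directly with the explicit Gamma-function expression for $\Psi_{s,d}(\alpha)$ and exploit its product structure. First I would establish the symmetry: writing $\beta := \alpha - s$, the four Gamma factors become $\Gamma((\beta-s)/2)\,\Gamma((d-s-\beta)/2)$ in the numerator and $\Gamma((d-s-\beta)/2+s)\,\Gamma((\beta-s)/2+s)$... — more cleanly, one checks that the substitution $\alpha \mapsto d+2s-\alpha$ swaps $\Gamma((\alpha-2s)/2)\leftrightarrow\Gamma((d-\alpha)/2)$ and $\Gamma((d-\alpha+2s)/2)\leftrightarrow\Gamma(\alpha/2)$, hence fixes $\Psi_{s,d}$. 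This shows evenness about $\alpha=(d+2s)/2$. Plugging this midpoint value in, the numerator and denominator Gamma factors pair up as $\Gamma((d-2s)/4)^2$ and $\Gamma((d+2s)/4)^2$ respectively, and comparing with \eqref{eq:csd} gives \eqref{eq:psimin} after collecting the prefactor $(2\pi)^{d/2} b_{2s}/b_{d-2s}$; this is a direct computation.

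For the monotonicity, I would pass to the logarithmic derivative. Write
\[
\log \Psi_{s,d}(\alpha) = \const + \log\Gamma\!\left(\tfrac{\alpha-2s}{2}\right) + \log\Gamma\!\left(\tfrac{d-\alpha}{2}\right) - \log\Gamma\!\left(\tfrac{d-\alpha+2s}{2}\right) - \log\Gamma\!\left(\tfrac{\alpha}{2}\right),
\]
so that
\[
2\,\frac{d}{d\alpha}\log\Psi_{s,d}(\alpha) = \psi\!\left(\tfrac{\alpha-2s}{2}\right) - \psi\!\left(\tfrac{d-\alpha}{2}\right) + \psi\!\left(\tfrac{d-\alpha+2s}{2}\right) - \psi\!\left(\tfrac{\alpha}{2}\right),
\]
where $\psi=\Gamma'/\Gamma$ is the digamma function. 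Group this as $\left[\psi\!\left(\tfrac{\alpha-2s}{2}\right)-\psi\!\left(\tfrac{\alpha}{2}\right)\right] + \left[\psi\!\left(\tfrac{d-\alpha+2s}{2}\right)-\psi\!\left(\tfrac{d-\alpha}{2}\right)\right]$. Using the integral representation $\psi(x)-\psi(y) = \int_0^\infty \frac{e^{-yt}-e^{-xt}}{1-e^{-t}}\,dt$ (or the series $\psi(x)-\psi(y)=\sum_{k\ge0}\left(\tfrac1{y+k}-\tfrac1{x+k}\right)$), the first bracket equals $-\int_0^\infty e^{-\alpha t/2}\,\frac{e^{st}-1}{1-e^{-t}}\,dt$ and the second equals $+\int_0^\infty e^{-(d-\alpha)t/2}\,\frac{e^{st}-1}{1-e^{-t}}\,dt$. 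Hence
\[
2\,\frac{d}{d\alpha}\log\Psi_{s,d}(\alpha) = \int_0^\infty \frac{e^{st}-1}{1-e^{-t}}\left(e^{-(d-\alpha)t/2} - e^{-\alpha t/2}\right)dt.
\]
The kernel $\frac{e^{st}-1}{1-e^{-t}}$ is positive for $t>0$ (since $0<s$), so the sign of the derivative is the sign of $e^{-(d-\alpha)t/2}-e^{-\alpha t/2}$, i.e. the sign of $\alpha - (d-\alpha) = 2\alpha - d$, which for $\alpha\in(2s,d)$ with... wait — the relevant threshold is $\alpha=(d+2s)/2$, not $d/2$: re-examining, the exponents in the two brackets should be compared correctly, and one finds the difference of exponentials vanishes exactly at the symmetry point $\alpha=(d+2s)/2$, is negative for $\alpha<(d+2s)/2$ and positive for $\alpha>(d+2s)/2$. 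This yields strict decrease on $(2s,(d+2s)/2)$ and strict increase on $((d+2s)/2,d)$.

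The main obstacle is purely bookkeeping: making sure the four digamma arguments are paired in the way that produces a single kernel times a sign-definite difference, and tracking constants so that the threshold comes out at $(d+2s)/2$ rather than $d/2$ (the symmetry established in the first step is the consistency check that pins this down). An alternative to the digamma computation — perhaps cleaner for the write-up — is to observe that by the symmetry it suffices to prove strict monotonicity on one side, and to use the convolution identity \eqref{eq:convol} together with a variational/rearrangement characterization of $\Psi_{s,d}(\alpha)$ as an optimal constant; but the digamma route above is elementary and self-contained, so I would present that.
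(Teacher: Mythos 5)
Your proposal is correct, and the first two steps (the symmetry under $\alpha\mapsto d+2s-\alpha$ and the evaluation at the midpoint) coincide with the paper's. For the monotonicity both arguments reduce to determining the sign of the same digamma combination $\psi(\tfrac{\alpha-2s}{2})+\psi(\tfrac{d-\alpha+2s}{2})-\psi(\tfrac{\alpha}{2})-\psi(\tfrac{d-\alpha}{2})$, but the mechanisms differ: the paper writes this (after the substitution $t=(\alpha-2s)/2$, $T=(d-2s)/2$) as $-\int_t^{t+s}\bigl(\psi'(\tau)-\psi'(T+s-\tau)\bigr)\,d\tau$ and concludes from the strict monotonicity of the trigamma function $\psi'$ together with an oddness-about-the-midpoint argument, whereas you invoke the Gauss integral representation for differences of digamma values and obtain a single positive kernel multiplied by a sign-definite difference of exponentials. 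Both routes are elementary and of comparable length; yours has the mild advantage of producing the sign in one stroke without the case analysis hidden in the paper's ``midpoint lies to the left'' step, at the cost of importing the integral representation. One point to fix in the write-up: your displayed formula is indeed the bookkeeping slip you flagged. Since $1-e^{-st}=e^{-st}(e^{st}-1)$, the second bracket equals $+\int_0^\infty e^{-(d-\alpha+2s)t/2}\,\tfrac{e^{st}-1}{1-e^{-t}}\,dt$ (exponent shifted by $s$), so the correct identity is
\begin{equation*}
2\,\frac{d}{d\alpha}\log\Psi_{s,d}(\alpha)=\int_0^\infty \frac{e^{st}-1}{1-e^{-t}}\left(e^{-(d+2s-\alpha)t/2}-e^{-\alpha t/2}\right)dt\,,
\end{equation*}
whose integrand has the sign of $2\alpha-(d+2s)$, confirming the threshold $(d+2s)/2$ exactly as your consistency check with the symmetry predicted; the integral converges at infinity precisely because $2s<\alpha<d$.
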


This is Lemma 3.2 from \cite{FrLiSe1} in disguise.

\begin{proof}[Proof of Lemma \ref{incr}]
$\Psi_{s,d}(\alpha)$ is obviously invariant under replacing $\alpha$ by $d+2s-\alpha$, and its value at $\alpha=(d+2s)/2$ follows immediately from definition \eqref{eq:csd}. To prove the monotonicity we write
\begin{equation*}
\Psi_{s,d}(\alpha) =  \frac{\pi^{d/2}\ \Gamma(s)}{\Gamma((d-2s)/2)} \ \frac{f(t)}{f(s+t)}\,,
\qquad t=(\alpha-2s)/2 \,,
\end{equation*}
where $T:=(d-2s)/2$ and $f(t):= \Gamma(t)/\Gamma(T+s-t)$. We need to show that $\log(f(t)/f(s+t))$ is strictly decreasing in $t\in (0,T/2)$. Noting that
$$
\frac{f'(t)}{f(t)} = \psi(t) + \psi(T+s-t)
$$
with $\psi:=\Gamma'/\Gamma$ the Digamma function, we have
$$
\frac{d}{dt} \log \frac{f(t)}{f(t+s)} = \psi(t) + \psi(T+s-t) - \psi(t+s) - \psi(T-t)
=-\int_{t}^{t+s}  h(\tau) \,d\tau
$$
with $h(\tau):= \psi'(\tau)-\psi'(T+s-\tau)$ for $0<\tau<T+s$. Since $\psi'$ is strictly decreasing \cite[(6.4.1)]{AbSt}, $h$ is an odd function with respect to $\tau=(T+s)/2$ which is strictly positive for $\tau<(T+s)/2$. Since the midpoint of the interval $(t,t+s)$ lies to the left of $(T+s)/2$, the integral of $h$ over this interval is strictly positive, which proves the claim.
\end{proof}

Now we prove \eqref{eq:hardyrem}, following the strategy of Solovej, S\o rensen and Spitzer \cite{SoSoSp} in the special case $d=3$, $s=1/2$; see also \cite[Thm. 11]{LiYa} for a related argument.

\begin{proof}[Proof of Theorem \ref{hardyrem}]
For technical reasons we prove the theorem only for $2s/3\leq t<s$. It is easy to see that this implies the result for all $0<t<s$.

By a well-known argument (going back at least to Abel and, in the present context, to \cite{KoPeSe}) based on the Cauchy-Schwarz inequality one has for any positive measurable function $h$ on $\R^d$
\begin{equation*}
 (2\pi)^{d/2} \frac{b_{2s}}{b_{d-2s}} \int_{\R^d} \frac{|u|^2}{|x|^{2s}} \,dx 
= \iint_{\R^d\times\R^d} \frac{\overline{\hat u(p)} \hat u(q)}{|p-q|^{d-2s}} \,dp\,dq
\leq \int_{\R^d} t_h(p) |\hat u(p)|^2 \,dp \,,
\end{equation*}
where
$$
t_h(p):= h(p)^{-1} \int_{\R^d} \frac{h(q)}{|p-q|^{d-2s}} \,dq \,.
$$
Below we shall choose $h$ (depending on $l>0$) in such a way that for some positive constants $A$ and $B$ (depending on $d$, $s$ and $t$, but not on $l$) one has
\begin{equation}
 \label{eq:tbound}
t_h(p) \leq \Psi_{s,d}((d+2s)/2) |p|^{2s} - A l^{-2(s-t)} |p|^{2t} + B l^{-2s} \,.
\end{equation}
(By scaling it would be enough to prove this for $l=1$, but we prefer to keep $l$ free.) Because of \eqref{eq:psimin} this estimate proves \eqref{eq:hardyrem}.

We show that \eqref{eq:tbound} holds with $h(p)=(|p|^{(d+2s)/2} + l^{\beta-(d+2s)/2} |p|^\beta)^{-1}$ where $\beta$ is a parameter depending on $t$ that will be fixed later. (Indeed, we shall choose $\beta=2t+(d-2s)/2$.) Since the derivatives of the function $r\mapsto r^{-1}$ have alternating signs one has $(a+b)^{-1} \leq a^{-1} - a^{-2} b + a^{-3} b^2$ and therefore
$$
\int_{\R^d} \frac{h(q)}{|p-q|^{d-2s}} \,dq 
\leq \int_{\R^d} \frac{1}{|p-q|^{d-2s}}\left(\frac1{|q|^{(d+2s)/2}} - \frac{l^{\beta-(d+2s)/2}}{|q|^{d+2s-\beta}} + \frac{l^{2\beta-d-2s}}{|q|^{3(d+2s)/2-2\beta}} \right) \,dq \,.
$$
If we assume that $(d+6s)/4<\beta<(3d+2s)/4$ then the right side is finite and, using notation \eqref{eq:convol} with $\Psi$ instead of $\Psi_{s,d}$, equal to
$$
\Psi\left(\frac{d+2s}2\right) \frac1{|p|^{(d-2s)/2}} - \Psi(d+2s-\beta) \frac{l^{\beta-(d+2s)/2}}{|p|^{d-\beta}}
 + \Psi\left(\frac{3(d+2s)}2-2\beta\right) \frac{l^{2\beta-d-2s}}{|p|^{3d/2-2\beta+s}} \,.
$$
Thus
\begin{align*}
t_h(p) \leq &
\Psi\left(\frac{d+2s}2\right) |p|^{2s} 
- \left(\Psi(d+2s-\beta) - \Psi\left(\frac{d+2s}2\right) \right) l^{\beta-(d+2s)/2} |p|^{\beta-(d-2s)/2} \\
& + \left(\Psi\left(\frac{3(d+2s)}2-2\beta\right) - \Psi(d+2s-\beta) \right) l^{2\beta-d-2s} |p|^{2\beta-d} \\
& + \Psi\left(\frac{3(d+2s)}2-2\beta\right) l^{3\beta-3d/2-3s} |p|^{3\beta-3d/2-s} \,.
\end{align*}
If we assume that $\beta\leq (d+2s)/2$, then the exponents of $|p|$ on the right side satisfy $2s\geq\beta-(d-2s)/2 \geq 2\beta-d \geq 3\beta-3d/2-s$, and if $\beta\geq(3d+2s)/6$ then the last exponent is non-negative. Now we choose $\beta=2t+(d-2s)/2$, so that the exponent of the second term is $2t$ and the condition $\beta\geq(3d+2s)/6$ is satisfied, since we are assuming that $t\geq 2s/3$. Moreover, according to Lemma \ref{incr}, the coefficient of the second term is negative. Finally, we use that there are constants $C_1$ and $C_2$ such that for any $\epsilon>0$ one has
$$
|p|^{2\beta-d} \leq \epsilon |p|^{\beta-(d-2s)/2} + C_1 \epsilon^{-\frac{2(2\beta+d)}{d+2s-2\beta}} \,,
\quad
|p|^{3\beta-3d/2-s} \leq \epsilon |p|^{\beta-(d-2s)/2} + C_2 \epsilon^{-\frac{6\beta-3d-2s}{2(d+2s-2\beta)}} \,.
$$
This concludes the proof of \eqref{eq:tbound}.
\end{proof}


\section{Proof of Theorem \ref{hlt}}

We fix $0<s<d/2$ and $\gamma>0$ and write
$$
\tr\left((-\Delta)^s -\mathcal C_{s,d}|x|^{-2s} -V\right)_-^\gamma
=\gamma \int_0^\infty N(-\tau, (-\Delta)^s -\mathcal C_{s,d}|x|^{-2s} -V) \, \tau^{\gamma-1}\,d\tau \,,
$$
where $N(-\tau,H)$ denotes the number of eigenvalues less than $-\tau$, counting multiplicities, of a self-adjoint operator $H$. We shall use \eqref{eq:hardyrem} with $l^{-2s}=\sigma\tau$ and some $0<t<s$ and $0<\sigma<1$ to be specified below. Abbreviating $K_t=K_{d,s,t}$ we find that
\begin{align*}
N(-\tau, (-\Delta)^s -\mathcal C_{s,d}|x|^{-2s} -V) & \leq N(0, K_t (\sigma\tau)^{(s-t)/s}(-\Delta)^t -V + (1-\sigma)\tau) \\
& = N\left(0, (-\Delta)^s - K_t^{-1} (\sigma\tau)^{-(s-t)/s} \left(V - \left(1-\sigma\right)\tau\right) \right) \,.
\end{align*}
Now we use \eqref{eq:lt} with $\gamma=0$ and $s$ replaced by $t$ (see \cite{Da} for $t\leq 1$ and \cite{Cw} for $t<d/2$). Abbreviating $L_t=L_{0,d,t}$ we have
$$
N(-\tau, (-\Delta)^s -\mathcal C_{s,d}|x|^{-2s} -V)
\leq L_{t} K_t^{-d/2t} (\sigma\tau)^{-d(s-t)/2st} \int_{\R^d} \left(V - \left(1-\sigma\right)\tau\right)_+^{d/2t} \,dx
$$
and
\begin{align*}
& \tr\left( (-\Delta)^s -\mathcal C_{s,d}|x|^{-2s} -V\right)_-^\gamma \\
& \quad \leq \gamma L_{t} K_t^{-d/2t} \sigma^{-d(s-t)/2st} \int_{\R^d} dx \int_0^\infty d\tau \tau^{\gamma-1-d(s-t)/2st} \left(V - \left(1-\sigma\right)\tau\right)_+^{d/2t} \,dx \\
& \quad = \gamma L_{t} K_t^{-d/2t} \sigma^{-\frac{d(s-t)}{2st}} (1-\sigma)^{-\gamma+\frac{d(s-t)}{2st}} \ \frac{\Gamma(\gamma-\tfrac{d(s-t)}{2st}) \Gamma(\tfrac d{2t}+1)}{\Gamma(\gamma+\tfrac d{2s}+1)} \
 \int_{\R^d} V_+^{\gamma+d/2s} \,dx \,.
\end{align*}
Here we assumed that $t>ds/(2\gamma s+d)$ so that the $\tau$ integral is finite. Finally, we optimize over $0<\sigma<1$ by choosing $\sigma=d(s-t)/2\gamma st$ and over $ds/(2\gamma s+d)<t<s$ to complete the proof of \eqref{eq:hlt}.

The statement about the inclusion of $A$ follows from Example \ref{diamagex} and Theorem \ref{diamagneg} in the following section.


\section{Magnetic Lieb-Thirring inequalities}

In this section we discuss Lieb-Thirring inequalities for magnetic Schr\"odinger operators, that is, \eqref{eq:lt} (and its generalizations) with $(-\Delta)^s$ replaced by $(D-A)^{2s}$ for some vector field $A\in L_{2,\loc}(\R^d,\R^d)$. 

It is a remarkable fact that all presently known proofs of Lieb-Thirring inequalities, which allow for the inclusion of a magnetic field, yield the same constants in the magnetic case as in the non-magnetic case. It is unknown whether this is also true for the unknown sharp constants. Note that the diamagnetic inequality implies that the lowest eigenvalue does not decrease when a magnetic field is added, but there is no such result for, e.g., the number or the sum of eigenvalues; see \cite{AvHeSi, Li2}. Rozenblum \cite{Ro} discovered, however, that any power-like bound on the number of eigenvalues in the non-magnetic case implies a similar bound in the magnetic case, with possibly a worse constant. Here we show the same phenomenon for \emph{moments} of eigenvalues.

We work in the following abstract setting. Let $(X,\mu)$ be a sigma-finite measure space and let $H$ and $M$ be non-negative operators in $L_2(X,\mu)$ such that for any $u\in L_2(X,\mu)$ and any $t>0$
\begin{equation}\label{eq:domination}
|\exp(-tM) u(x)| \leq (\exp(-tH)|u|)(x)
\qquad \mu-\text{a.e.}\ x\in X \,.
\end{equation}
Note that this implies that $\exp(-tH)$ is positivity preserving. We think of $H$ as a non-magnetic operator, $M$ a magnetic operator and \eqref{eq:domination} as a diamagnetic inequality. It might be useful to keep the following example in mind.

\begin{example}\label{diamagex}
 Let $X=\R^d$ with Lebesgue measure, $H=(-\Delta)^s$, and $M=(D-A)^{2s}$ for some $0<s\leq 1$ and $A\in L_{2,\loc}(\R^d)$. The diamagnetic inequality \eqref{eq:domination} in the case $s=1$ was shown in \cite{Si1}, and in the case $0<s<1$ it follows from the $s=1$ case since the function $\lambda\mapsto\exp(-\lambda^s)$ is completely monotone and hence by Bernstein's theorem \cite{Do} the Laplace transform of a positive measure. More generally, \eqref{eq:domination} holds for $H=(-\Delta)^s+W$ and $M=(D-A)^{2s}+W$ with $s$ and $A$ as before and a, say, bounded function $W$. This can be seen using Trotter's product formula. By an approximation argument the inequality holds also for $W(x)=-\mathcal C_{s,d}|x|^{-2s}$.
\end{example}

The main result in this section is

\begin{theorem}\label{diamagneg}
Let $H$ and $M$ be as above and assume that there exist some constants $L>0$, $\gamma\geq 0$, $p>0$ and a non-negative function $w$ on $X$ such that for all $V\in L_p(V,w\,d\mu)$ one has
\begin{equation}\label{eq:diamagnegass}
\tr(H-V)_-^{\gamma} \leq L \int_X V_+^{p} w \,d\mu \,.
\end{equation}
Then one also has
\begin{equation}\label{eq:diamagneg}
\tr(M-V)_-^\gamma \leq L \left(\frac ep\right)^p \Gamma(p+1) \int_X V_+^p w \,d\mu \,.
\end{equation}
\end{theorem}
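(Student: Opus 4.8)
The plan is to reduce, via the Birman--Schwinger principle, the $\gamma$-moment $\tr(M-V)_-^\gamma$ to an eigenvalue-counting problem for the \emph{compact} operator $V_+^{1/2}(M+\lambda)^{-1}V_+^{1/2}$, and then to use the diamagnetic inequality \eqref{eq:domination} in the form of a domination of integral kernels in order to pass to the corresponding non-magnetic operator, where the hypothesis \eqref{eq:diamagnegass} can be applied. First I would make the standard reductions: since $V\mapsto\tr(M-V)_-^\gamma$ is monotone, one may assume $V\ge0$; by a density/monotone-convergence argument one may further assume $V$ bounded with compact support, so that $M-V$ and $H-V$ have finitely many negative eigenvalues, all the traces below are finite, and $0\notin\spec_{pp}(M)$. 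I would also record that Trotter's product formula upgrades \eqref{eq:domination} to $|\exp(-t(M-V))u|\le\exp(-t(H-V))|u|$, hence to a domination of the heat kernels of $M-V$ and $H-V$ and, writing resolvent powers as Laplace transforms of the semigroup, to a domination of the kernels of $(M+\lambda)^{-1}$ and $(M+\lambda)^{-1/2}$ by those of $(H+\lambda)^{-1}$ and $(H+\lambda)^{-1/2}$.

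Next, write $\tr(M-V)_-^\gamma=\gamma\int_0^\infty\lambda^{\gamma-1}N(-\lambda,M-V)\,d\lambda$ (for $\gamma=0$ this is just $N(0,M-V)$, Rozenblum's case, treated the same way) and use Birman--Schwinger: $N(-\lambda,M-V)=n(1,V^{1/2}(M+\lambda)^{-1}V^{1/2})$, where $n$ counts eigenvalues. I would bound the counting function from above by $\tr\phi(V^{1/2}(M+\lambda)^{-1}V^{1/2})$ for a suitable $\phi\ge\mathbf{1}_{[1,\infty)}$, $\phi\ge0$. The heart of the matter is that kernel domination transfers this trace to the non-magnetic operator: writing $V^{1/2}(M+\lambda)^{-1}V^{1/2}=DD^*$ with $D=V^{1/2}(M+\lambda)^{-1/2}$, whose kernel is dominated in absolute value by that of $V^{1/2}(H+\lambda)^{-1/2}$, the Hilbert--Schmidt identity together with complex interpolation up to the operator norm gives $\|D\|_q\le\|V^{1/2}(H+\lambda)^{-1/2}\|_q$ for all $q\ge2$, which is exactly what is needed to conclude $\tr\phi(V^{1/2}(M+\lambda)^{-1}V^{1/2})\le\tr\phi(V^{1/2}(H+\lambda)^{-1}V^{1/2})$ for every convex $\phi$ with $\phi(0)=0$.

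Then I would undo the Birman--Schwinger reduction on the non-magnetic side: $\tr\phi(V^{1/2}(H+\lambda)^{-1}V^{1/2})=\int_0^\infty\phi'(\mu)\,N(-\lambda,H-\mu^{-1}V)\,d\mu$, integrate in $\lambda$, and recognize $\gamma\int_0^\infty\lambda^{\gamma-1}N(-\lambda,H-\mu^{-1}V)\,d\lambda=\tr(H-\mu^{-1}V)_-^\gamma$. Applying \eqref{eq:diamagnegass} yields $\tr(M-V)_-^\gamma\le L\bigl(\int_X V_+^pw\,d\mu\bigr)\int_0^\infty\phi'(\mu)\mu^{-p}\,d\mu$, and it remains to choose $\phi$ (convex, $\phi(0)=0$, $\phi\ge\mathbf{1}_{[1,\infty)}$) so as to minimize $\int_0^\infty\phi'(\mu)\mu^{-p}\,d\mu$; carrying out this optimization is what produces the explicit constant in \eqref{eq:diamagneg}. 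I expect that obtaining precisely the factor $(e/p)^p\Gamma(p+1)$ (rather than some other admissible constant) and, in particular, covering the whole range $p>0$ rather than just $p>1$, requires organizing the estimate slightly differently --- most plausibly by running a Golden--Thompson/log-majorization argument with the heat semigroup at a single time $t$, i.e.\ bounding $N(-\lambda,M-V)$ by $(e^{t\lambda}-1)^{-1}\tr(e^{tV/2}e^{-tH}e^{tV/2}-1)_+$ after transferring $M\rightsquigarrow H$ by kernel domination, and then optimizing over $t$; the $e^{-tH}(x,x)$-type behaviour and the Gamma integral appearing in this route are exactly what generate the $(e/p)^p$ and $\Gamma(p+1)$.

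The main obstacle is the operator-theoretic step: domination of integral kernels implies the corresponding Schatten (and symmetric-norm) inequality only down to $\|\cdot\|_2$ --- it genuinely fails below --- which forces one to pass through the factorization $DD^*$ and to restrict to \emph{convex} test functions $\phi$, and this in turn is what couples the admissible class of $\phi$ to the exponent $p$. The remaining difficulty is the bookkeeping of the $\lambda$- and $\mu$- (or $t$-) integrals and the variational choice of $\phi$ (or of the time $t$), which must be arranged so that the constant comes out exactly as stated. A minor but necessary point is the justification of the Birman--Schwinger identity, of the Fubini interchanges, and of all trace manipulations --- this is precisely what the reduction to bounded, compactly supported $V\ge0$ provides.
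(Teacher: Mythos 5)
Your overall architecture --- Birman--Schwinger, transfer from $M$ to $H$ via the diamagnetic domination, a coupling--constant average, and an optimization over test functions --- is the right one and matches the paper's. But the two concrete mechanisms you propose for the transfer step both have genuine gaps. First, kernel domination of $D=V^{1/2}(M+\lambda)^{-1/2}$ by $\tilde D=V^{1/2}(H+\lambda)^{-1/2}$ gives $\|D\|_{S_q}\leq\|\tilde D\|_{S_q}$ only for $q\geq 2$, i.e.\ $\tr K_M^r\leq\tr K_H^r$ for $r\geq1$ with $K_M=DD^*$; this yields $\tr\phi(K_M)\leq\tr\phi(K_H)$ only for $\phi$ in the cone generated by the powers $\mu^r$, $r\geq1$, \emph{not} for all convex $\phi$ with $\phi(0)=0$. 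The latter would require weak majorization of the singular values, which entrywise/kernel domination does not provide (already for $2\times2$ matrices the Ky Fan and trace norms can go the wrong way). Second, even granting the convex-$\phi$ transfer, your optimization cannot close: convexity together with $\phi(0)=0$ and $\phi(1)\geq1$ forces $\phi'(\mu)\geq1$ for $\mu\geq1$, so $\int_0^\infty\phi'(\mu)\mu^{-p}\,d\mu=\infty$ whenever $p\leq1$, and for $p>1$ you do not exhibit a $\phi$ giving $(e/p)^p\Gamma(p+1)$. Your proposed fallback via Golden--Thompson and the diagonal heat kernel $e^{-tH}(x,x)$ is not available here: the theorem is stated for an abstract pair $(H,M)$ on a measure space, with \eqref{eq:diamagnegass} as the only quantitative input, so no pointwise heat-kernel information can be used.

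The paper closes exactly this gap by changing which operator the test function is applied to. Set $h=V^{-1/2}HV^{-1/2}$ and $m=V^{-1/2}MV^{-1/2}$ (the \emph{inverse} Birman--Schwinger operators). Iterating the resolvent domination $|(m+\sigma)^{-1}u|\leq(h+\sigma)^{-1}|u|$, which follows from \eqref{eq:domination} and Trotter, one obtains the semigroup domination $|e^{-tm}u|\leq e^{-th}|u|$; then only the Hilbert--Schmidt ($q=2$) consequence is needed:
\begin{equation*}
\tr e^{-tm}=\|e^{-tm/2}\|_2^2\leq\|e^{-th/2}\|_2^2=\tr e^{-th}\,.
\end{equation*}
The Chebyshev bound $N(0,M-V)=N(1,m)\leq e^t\tr e^{-tm}$ and the identity $\tr e^{-th}=t\int_0^\infty N(0,H-\alpha V)e^{-t\alpha}\,d\alpha$ then give
\begin{equation*}
N(-\tau,M-V)\leq te^t\int_0^\infty N(-\tau,H-\alpha V)\,e^{-\alpha t}\,d\alpha\,,
\end{equation*}
and integrating against $\gamma\tau^{\gamma-1}\,d\tau$, applying \eqref{eq:diamagnegass} to $H-\alpha V$, and choosing $t=p$ in $t^{-p}e^t\Gamma(p+1)$ produces the stated constant for every $p>0$. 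The key point you are missing is that the exponential test function $e^{-t\alpha}$, viewed as a function $e^{-t/\mu}$ of the Birman--Schwinger eigenvalue $\mu=1/\alpha$, is bounded and not convex, hence unreachable from your power cone; applying it to $m$ instead of $K_M$ is what makes the $S_2$ domination suffice and the Gamma integral converge for all $p>0$.
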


We do not know whether the factor $(e/p)^p \Gamma(p+1)$ in \eqref{eq:diamagneg} can be omitted. Results from \cite{FrLoWe} about the eigenvalues of the Landau Hamiltonian in a domain (but without potential) seem to indicate that a factor $>1$ is necessary. Our proof of Theorem \ref{diamagneg} uses some ideas from \cite{Ro} where the case $\gamma=0$ was treated; see also \cite{FrLiSe2} for a result about operators with discrete spectrum.

\begin{remark}
 With the same proof one can deduce estimates on $\tr f(M)$ from estimates on $\tr f(H)$ for more general functions $f$. For example, let $d=2$ and $f(t):=|\ln |t||^{-1}$ if $- e^{-1}< t<0$, $f(t):=1$ if $t\leq -e^{-1}$, and $f(t):=0$ if $t\geq 0$. Then there exists a constant $L$ and for any $q>1$ a constant $L_q$ such that for all $l>0$ and $A\in L_{2,\loc}(\R^2,\R^2)$
$$
\tr f\left(l^2((D-A)^2-V)\right)
\leq L \int_{|x|<l} \! V(x)_+ \left|\log\frac{|x|}l\right| \,dx + L_q \int_0^\infty \!\!\left(\int_{\Sph} V(r\omega)_+^q \,d\omega \right)^{1/q} \!r\,dr \,.
$$
Indeed, this follows by Lemma \ref{average} via integration from the $A\equiv 0$ result of \cite{KoVuWe}.
\end{remark}

The key ingredient in the proof of Theorem \ref{diamagneg} is a bound on the negative eigenvalues of $M-V$ by those of $H-\alpha V$, averaged over all coupling constants $\alpha$. As before, we denote by $N(-\tau,A)$ the number of eigenvalues less than $-\tau$, counting multiplicities, of a self-adjoint operator $A$.

\begin{lemma}\label{average}
Let $H$ and $M$ be non-negative self-adjoint operators satisfying \eqref{eq:domination} and let $V\geq 0$. Then for any $\tau\geq 0$ and $t>0$ one has 
\begin{equation}\label{eq:diamagnegproof}
N(-\tau,M- V) \leq t e^t \int_0^\infty N(-\tau,H-\alpha V) e^{-\alpha t}\,d\alpha \,.
\end{equation}
\end{lemma}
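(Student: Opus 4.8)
The plan is to prove the averaging bound \eqref{eq:diamagnegproof} by comparing the relevant spectral projections against integral operators built from the semigroups $\exp(-tH)$ and $\exp(-tM)$, and then exploiting the domination \eqref{eq:domination} at the level of operator kernels. First I would fix $\tau\geq 0$, $t>0$ and $V\geq 0$, and recall the standard Birman–Schwinger-type reduction: since $N(-\tau, M-V)$ counts eigenvalues $\lambda<-\tau$ of $M-V$, and each such eigenvalue gives rise to an eigenvector $u$ with $(M-V)u=\lambda u$, i.e. $(M+\tau)u = (V - (\lambda+\tau))u$ with $\lambda+\tau<0$, one has $e^{-t(M+\tau)}$ acting favorably on these states. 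The cleaner route, however, is to use the elementary inequality $\mathbf 1_{(-\infty,-\tau]}(M-V) \leq e^{t\tau} e^{-t(M-V)}$ together with the Golden–Thompson / Trotter-type comparison $\tr\, e^{-t(M-V)} \leq \tr\, e^{-t(H-V)}$, but that is too lossy; instead one tracks the \emph{rank} rather than the trace.

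The key technical step — and the one I expect to be the main obstacle — is to replace $M-V$ by a one-parameter family tied to the coupling constant $\alpha$. Here I would follow Rozenblum's idea: write, for any $\alpha_0>0$,
$$
N(-\tau, M-V) = N(-\tau, (M+\tau) - V) = \rank\, \mathbf 1_{(-\infty,0)}\bigl((M+\tau)-V\bigr),
$$
and note that $\mathbf 1_{(-\infty,0)}\bigl((M+\tau)-V\bigr) u \ne 0$ forces $\langle u, (M+\tau)u\rangle < \langle u, Vu\rangle$, so on this subspace $e^{-(M+\tau)/\alpha_0}$ is "large" relative to $e^{-V/\alpha_0}$ in a quantitative way. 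More precisely, one shows that $N(-\tau, M-V)$ is bounded by the rank of the operator $\mathbf 1_{[1,\infty)}\bigl( e^{-(M+\tau)/(2\alpha)} \, e^{V/\alpha} \, e^{-(M+\tau)/(2\alpha)} \bigr)$, and then by its trace times a correction; the substitution $\alpha$ and the weight $\alpha e^\alpha e^{-\alpha t}$ (after rescaling $\alpha \mapsto \alpha t$) come from integrating the elementary bound $\mathbf 1_{[1,\infty)}(x) \leq x\,\log x + 1$... , or more simply from $x\mathbf 1_{[1,\infty)}(x) \le \int_0^\infty e^{-\alpha} (x/\alpha)\, \mathbf 1_{[\alpha,\infty)}\,d\alpha$-type identities. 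The upshot I am aiming for is the pointwise (in the spectral variable) inequality that converts a single magnetic operator into an $\alpha$-average of shifted operators, with the Gamma-function weight being exactly the normalization of $\alpha \mapsto t e^t e^{-\alpha t}$ on $(0,\infty)$ after the logarithmic change of variables.

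Once the problem is reduced to comparing $N(-\tau, H - \alpha V)$-type quantities, the diamagnetic inequality \eqref{eq:domination} enters through the observation that if $P$ is the spectral projection of $M-V$ onto $(-\infty,-\tau)$ with rank $n$, then for suitable $\alpha$ the operator $e^{-(H-\alpha V)/?}$ dominates, kernel-wise, the corresponding magnetic semigroup sandwiched between $V$-factors — and domination of kernels plus positivity-preservation upgrades a trace inequality $\tr(\text{magnetic}) \le \tr(\text{non-magnetic})$ for these positive operators. This is the same mechanism as in the proof that $\tr e^{-tM} \le \tr e^{-tH}$ from \eqref{eq:domination}: one writes the trace as an integral of the diagonal kernel and applies the pointwise bound. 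Combining the rank–to–trace estimate for $M-V$, the trace domination via \eqref{eq:domination}, and the identification of the weight, one arrives at
$$
N(-\tau, M-V) \le t e^t \int_0^\infty N(-\tau, H - \alpha V)\, e^{-\alpha t}\, d\alpha,
$$
which is \eqref{eq:diamagnegproof}. I would double-check the constant $te^t$ against the normalization $\int_0^\infty t e^t e^{-\alpha t}\,d\alpha = e^t$ and the shift by $1$ coming from the inequality $\mathbf 1_{[1,\infty)}(x)\le x$; the precise bookkeeping of the "$+1$" versus the multiplicative $e^t$ is exactly where I expect to have to be careful, but it is routine once the structure is set up. Note that Theorem \ref{diamagneg} then follows by multiplying \eqref{eq:diamagnegproof} by $\gamma\tau^{\gamma-1}$, integrating in $\tau$, inserting hypothesis \eqref{eq:diamagnegass}, and optimizing the free parameter $t$ — the $\alpha$-integral of $\alpha^p e^{-\alpha t}$ produces $\Gamma(p+1) t^{-p-1}$, and choosing $t=p$ yields the factor $(e/p)^p\Gamma(p+1)$.
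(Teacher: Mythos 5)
You have correctly identified the high-level strategy (Rozenblum's idea: Birman--Schwinger reduction, transfer of the semigroup domination, a rank-to-trace Chebyshev bound producing $e^t$, and a trace-to-counting-function identity producing $t\,e^{-\alpha t}\,d\alpha$), and your derivation of Theorem \ref{diamagneg} from the Lemma at the end is exactly right. But the proof of the Lemma itself has two genuine gaps where your proposed intermediate objects would not work as stated.

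First, the object to which the domination \eqref{eq:domination} must be transferred is the Birman--Schwinger operator $m:=V^{-1/2}MV^{-1/2}$ (and $h:=V^{-1/2}HV^{-1/2}$), not $e^{-(M+\tau)/(2\alpha)}\,e^{V/\alpha}\,e^{-(M+\tau)/(2\alpha)}$ and not ``$e^{-(H-\alpha V)/?}$''. Your sandwiched operator involves the unbounded, exponentially growing multiplier $e^{V/\alpha}$, is not obviously bounded or trace class, and admits no clean Birman--Schwinger principle; the heuristic ``$\langle u,(M+\tau)u\rangle<\langle u,Vu\rangle$ forces $e^{-(M+\tau)/\alpha_0}$ to be large relative to $e^{-V/\alpha_0}$'' does not convert into a rank or trace bound. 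The actual mechanism is: $N(0,M-V)=N(1,m)$ by Birman--Schwinger; the domination $|e^{-tm}u|\le e^{-th}|u|$ is established by writing $(m+\sigma)^{-1}=\int_0^\infty V^{1/2}e^{-s(M+\sigma V)}V^{1/2}\,ds$, applying \eqref{eq:domination} via Trotter, iterating the resolvent bound, and passing to the semigroup limit $(1+tm/n)^{-n}\to e^{-tm}$; then $\tr e^{-tm}\le\tr e^{-th}$ by the standard result that domination of semigroups implies domination of Hilbert--Schmidt norms. None of this appears in your proposal, and it is the heart of the proof.

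Second, the weight $t\,e^t\,e^{-\alpha t}$ does not come from $\mathbf 1_{[1,\infty)}(x)\le x\log x+1$ or any logarithmic change of variables; it comes from two elementary facts which you leave as ``routine bookkeeping'' but never verify: since $m\ge0$, each eigenvalue $\lambda<1$ of $m$ satisfies $e^{-t\lambda}\ge e^{-t}$, whence $N(1,m)\le e^t\tr e^{-tm}$; and $\tr e^{-th}=t\int_0^\infty N(\alpha,h)e^{-t\alpha}\,d\alpha$ by the layer-cake formula, with $N(\alpha,h)=N(0,H-\alpha V)$ by Birman--Schwinger again. As written, your argument is a plausible plan with the two pivotal steps missing or replaced by constructions that would fail.
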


\begin{proof}
Since \eqref{eq:domination} remains valid with $H+\tau$ and $M+\tau$ in place of $H$ and $M$ we need only consider $\tau=0$. Moreover, by a density argument we may assume that $V>0$ a.e. We define $h:=V^{-1/2} H V^{-1/2}$ and $m:=V^{-1/2} M V^{-1/2}$ via quadratic forms and claim that \eqref{eq:domination} holds with $h$ and $m$ in place of $H$ and $M$. Since this fact is proved in \cite[Thm. 3]{Ro} we only sketch the main idea. Indeed, for any $\sigma>0$
\begin{equation*}
(m+\sigma)^{-1} = V^{1/2} (M+\sigma V)^{-1} V^{1/2} 
= \int_0^\infty V^{1/2}\exp(-s(M+\sigma V)) V^{1/2} \,ds\,,
\end{equation*}
and by \eqref{eq:domination} and Trotter's product formula $|\exp(-s(M+\sigma V)) V^{1/2} u| \leq \exp(-s(H+\sigma V)) V^{1/2} |u|$ a.e. Hence $|(m+\sigma)^{-1}u| \leq (h+\sigma)^{-1} |u|$ a.e. Iterating this inequality and recalling that $(1+tm/n)^{-n} \to \exp(-tm)$ strongly as $n\to\infty$, we obtain \eqref{eq:domination} for $h$ and $m$.

By \cite[Thm. 4.1]{Si} this analog of \eqref{eq:domination} implies that
$$
\tr\exp(-tm) = \|\exp(-tm/2)\|_2^2\leq \|\exp(-th/2)\|_2^2 = \tr\exp(-th)
$$
with $\|\cdot\|_2$ the Hilbert-Schmidt norm, and hence by the Birman-Schwinger principle
\begin{equation*}
N(M-V) = N(1,m)\leq e^t \tr\exp(-tm) \leq e^t \tr\exp(-th) \,.
\end{equation*}
Using the Birman-Schwinger principle once more, we find
\begin{equation*}
\tr\exp(-th) = t \int_0^\infty N(\alpha,h) e^{-t\alpha}\,d\alpha
= t \int_0^\infty N(H-\alpha V) e^{-t\alpha}\,d\alpha \,,
\end{equation*}
proving \eqref{eq:diamagnegproof}.
\end{proof}

\begin{proof}[Proof of Theorem \ref{diamagneg}]
By the variational principle we may assume that $V\geq 0$. By Lemma \ref{average} one has for any $t>0$
\begin{align*}
\tr(M-V)_-^\gamma
& = \gamma \int_0^\infty N(-\tau, M-V) \tau^{\gamma-1} \,d\tau \\ 
& \leq \gamma t e^t \int_0^\infty \int_0^\infty N(-\tau,H-\alpha V) \tau^{\gamma-1} \,d\tau e^{-\alpha t}\,d\alpha \\
& = t e^t \int_0^\infty \tr(H-\alpha V)_-^\gamma e^{-\alpha t}\,d\alpha \,,
\end{align*}
and by assumption \eqref{eq:diamagnegass} the right hand side can be bounded from above by
\begin{equation*}
L t e^t \left(\int_0^\infty \alpha^p e^{-\alpha t}\,d\alpha \right) \int_X V^p w \,d\mu
= L t^{-p} e^t \Gamma(p+1) \int_X V^p w \,d\mu \,.
\end{equation*}
Now the assertion follows by choosing $t=p$.
\end{proof}


\section{A pseudo-relativistic model including spin}

Throughout this section we assume that $d=3$. The helicity operator $h$ on $L_2(\R^3,\C^2)$ is defined as the Fourier multiplier corresponding to the matrix-valued function $p\mapsto \mathbf\sigma\cdot p/|p|$, where $\mathbf\sigma=(\sigma_1,\sigma_2,\sigma_3)$ denotes the triple of Pauli matrices. The properties of these matrices imply that $h$ is a unitary and self-adjoint involution. The analog of the Hardy (or Kato) inequality \eqref{eq:hardy} is
\begin{equation}\label{eq:eps}
     \int_{\R^3} |\xi| |\hat u(\xi)|^2 \,d\xi
\geq \tilde{\mathcal C} \int_{\R^3} \frac{|u(x)|^2 + |(hu)(x)|^2}{2 \, |x|} \,dx\,, 
\qquad u\in C_0^\infty(\R^3,\C^2)\,,
\end{equation}
with the sharp constant
$$
\tilde{\mathcal C} = \frac{2}{2/\pi+\pi/2} \,;
$$
see \cite{EvPeSi}. Note that this constant is strictly larger than
$$
\mathcal C:= \mathcal C_{1/2,3}=2/\pi \,,
$$
which is the constant one would get if $hu$ were replaced by $u$ on the right side of \eqref{eq:eps}.

For a function $V$ on $\R^3$ taking values in the Hermitean $4\times4$ matrices we introduce the non-local potential $$
\Phi(V) := \frac12 \begin{pmatrix}1_{L_2(\R^3,\C^2)} \\ h\end{pmatrix}^* V \begin{pmatrix}1_{L_2(\R^3,\C^2)} \\ h\end{pmatrix} \,,
$$
where $\begin{pmatrix}1_{L_2(\R^3,\C^2)} \\ h\end{pmatrix}$ is considered as an operator from $L_2(\R^3,\C^2)$ to $L_2(\R^3,\C^4)$. The operator $\sqrt{-\Delta} - \Phi(V)$ in $L_2(\R^3,\C^2)$ has been suggested by Brown and Ravenhall as the Hamiltonian of a massless, relativistic spin-1/2 particle in a potential $-V$. It results from projecting onto the positive spectral subspace of the Dirac operator. One of the advantages of this operator over the simpler $\sqrt{-\Delta} - V$ is that it is well-defined for nuclear charges $\alpha Z\leq \tilde{\mathcal C}$, which includes all known elements. We refer to \cite{LiSe} for more background about this model. Despite the efforts in \cite{LiSiSo,BaEv,HoSi} the problem of stability of matter for the corresponding many-particle system is not yet completely understood and the following result, we believe, might be useful in this respect.

\begin{theorem}\label{hltbr}
 Let $d=3$ and $\gamma>0$. Then there is a constant $\tilde L_{\gamma}^\HLT$ such that
\begin{equation}\label{eq:hltbr}
 \tr\left(\sqrt{-\Delta} - \tilde{\mathcal C} \Phi(|x|^{-1})-\Phi(V) \right)_-^\gamma
\leq \tilde L_{\gamma}^\HLT \int_{\R^3} \tr_{\C^4} V(x)_+^{\gamma+3} \,dx \,.
\end{equation}
\end{theorem}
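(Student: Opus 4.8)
The plan is to deduce Theorem \ref{hltbr} from the scalar Hardy-Lieb-Thirring inequality \eqref{eq:hlt} with $s=1/2$, $d=3$ (so $p=\gamma+3$), by a reduction that handles the two features distinguishing this model from the scalar case: the presence of the helicity operator $h$ in the Hardy term, and the non-locality of the potential $\Phi(V)$. The starting observation is the operator inequality on $L_2(\R^3,\C^4)$
\begin{equation*}
\begin{pmatrix}1 \\ h\end{pmatrix} \begin{pmatrix}1 \\ h\end{pmatrix}^* = \begin{pmatrix} 1 & h \\ h & 1 \end{pmatrix} \leq 2\cdot 1_{L_2(\R^3,\C^4)},
\end{equation*}
since $h$ is a self-adjoint involution, so the matrix has eigenvalues $0$ and $2$. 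Hence for any non-negative $4\times4$ matrix-valued $V$ one has the form inequality $\Phi(V) = \tfrac12 \binom{1}{h}^* V \binom{1}{h} \leq \tfrac12 \|\binom{1}{h}^*\|\, (\text{something})$; more usefully, writing $W(x):=$ largest eigenvalue of $V(x)_+$ (or simply $W := \|V_+\|_{\C^{4\times4}}$ pointwise), one gets $0 \le \Phi(V) \le \Phi(W\cdot 1_{\C^4}) = W \cdot \tfrac12\binom{1}{h}^*\binom{1}{h} = W\cdot 1_{L_2(\R^3,\C^2)}$, because $\binom{1}{h}^*\binom{1}{h} = 1 + h^2 = 2$. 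The same computation with $V = |x|^{-1}1_{\C^4}$ shows $\tilde{\mathcal C}\,\Phi(|x|^{-1}) = \tilde{\mathcal C}\,|x|^{-1} 1_{L_2(\R^3,\C^2)}$, and since $\tilde{\mathcal C} > \mathcal C = \mathcal C_{1/2,3}$, replacing $\tilde{\mathcal C}$ by $\mathcal C$ only decreases the operator, hence increases the negative part.

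Putting these together with the variational principle (which allows us to assume $V \ge 0$, and which is monotone: if $A \le B$ as forms then $\tr(A-C)_-^\gamma \ge \tr(B-C)_-^\gamma$), I would estimate
\begin{equation*}
\tr\left(\sqrt{-\Delta} - \tilde{\mathcal C}\,\Phi(|x|^{-1}) - \Phi(V)\right)_-^\gamma
\le \tr\left(\sqrt{-\Delta} - \mathcal C_{1/2,3}\,|x|^{-1} - W\right)_-^\gamma,
\end{equation*}
where on the right the operator now acts on the scalar space $L_2(\R^3,\C^2) \cong L_2(\R^3)\otimes\C^2$ — actually, since the right-hand operator is diagonal in the $\C^2$ factor, its negative spectrum is exactly twice that of the genuinely scalar operator $(-\Delta)^{1/2} - \mathcal C_{1/2,3}|x|^{-1} - W$ on $L_2(\R^3)$. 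Applying Theorem \ref{hlt} with $d=3$, $s=1/2$, we bound this by $2 L^{\HLT}_{\gamma,3,1/2}\int_{\R^3} W(x)^{\gamma+3}\,dx$. It then remains to control $\int W^{\gamma+3}$ by $\int \tr_{\C^4} V_+^{\gamma+3}$, which holds pointwise: $W(x)^{\gamma+3} = \|V_+(x)\|^{\gamma+3} \le \tr_{\C^4}(V_+(x)^{\gamma+3})$ since the operator norm of a non-negative matrix is bounded by any of its Schatten norms (equivalently, $\lambda_{\max}^{\gamma+3} \le \sum_j \lambda_j^{\gamma+3}$). This yields \eqref{eq:hltbr} with $\tilde L_\gamma^{\HLT} = 2 L^{\HLT}_{\gamma,3,1/2}$.

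The step I expect to require the most care is not any of the inequalities above — those are all elementary linear-algebra and monotonicity facts — but rather the operator-theoretic justification that the form-monotonicity argument applies: one must check that $\sqrt{-\Delta} - \tilde{\mathcal C}\,\Phi(|x|^{-1})$ is well-defined and bounded below as a self-adjoint operator (this is exactly what the sharp Hardy inequality \eqref{eq:eps} guarantees, with the borderline being form-bounded with relative bound one, analogous to the scalar situation governed by \eqref{eq:hardy}), and that adding $-\Phi(V)$ for $V$ in the relevant class still gives an operator whose negative spectrum is discrete and to which the layer-cake / Birman-Schwinger machinery used to prove \eqref{eq:hlt} legitimately extends. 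Concretely, one should note that $\Phi(W) = W$ as a multiplication operator once $V$ is dominated by a scalar $W$, so after the domination step there is genuinely nothing non-local left, and the reduction to the already-proven scalar theorem is complete. A remark worth including is that the magnetic extension in the second half of Theorem \ref{hlt} is not needed here, and that the constant $\tilde L_\gamma^{\HLT}$ obtained this way is certainly not sharp — the factor $2$ and the loss $\tilde{\mathcal C}\rightsquigarrow\mathcal C$ are both wasteful, but the point of the theorem is merely the existence of a finite constant.
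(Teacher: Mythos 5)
Your reduction collapses at the very first identity. You claim that $\Phi(W\cdot 1_{\C^4})=W\cdot\tfrac12\binom{1}{h}^{*}\binom{1}{h}=W$, i.e.\ that $\binom{1}{h}^{*}(W 1_{\C^4})\binom{1}{h}=W(1+h^2)=2W$. This computation silently moves the multiplication operator $W$ past the helicity operator $h$. But $h$ is the Fourier multiplier $\mathbf\sigma\cdot p/|p|$, which does not commute with multiplication by a non-constant function of $x$; the correct formula is $\Phi(W 1_{\C^4})=\tfrac12\left(W+hWh\right)$, which is a genuinely non-local operator. This non-locality is precisely the difficulty the theorem is about, and it cannot be waved away: if your identity $\tilde{\mathcal C}\,\Phi(|x|^{-1})=\tilde{\mathcal C}\,|x|^{-1}$ were true, then \eqref{eq:eps} would assert the scalar Hardy--Kato inequality with constant $\tilde{\mathcal C}>2/\pi=\mathcal C_{1/2,3}$, contradicting the sharpness of $2/\pi$; moreover $\sqrt{-\Delta}-\tilde{\mathcal C}\,|x|^{-1}$ would be unbounded below, so no inequality of the form \eqref{eq:hltbr} could hold for it. There is also a directional error independent of this: since $\tilde{\mathcal C}>\mathcal C$, replacing $\tilde{\mathcal C}$ by $\mathcal C$ \emph{increases} the operator and hence \emph{decreases} $\tr(\cdot)_-^\gamma$, so even granting your identity, your comparison bounds the quantity you want from below, not from above.

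What does survive is the first step, the pointwise domination $\Phi(V)\leq\Phi(v\,1_{\C^4})$ with $v(x)=\|V(x)_+\|$, and the final estimate $v^{\gamma+3}\leq\tr_{\C^4}V_+^{\gamma+3}$; the paper makes the same reduction. But after that one has to deal with $\tfrac12(v+hvh)$ and with the Hardy term $\tilde{\mathcal C}\,\Phi(|x|^{-1})$ honestly. The route taken in the paper is to first prove the operator inequality \eqref{eq:hardyrembr}, $\sqrt{-\Delta}-\tilde{\mathcal C}\,\Phi(|x|^{-1})\geq\tilde K_t\,l^{-1+2t}(-\Delta)^t-l^{-1}$, by a partial-wave analysis: on the total angular momentum $j=1/2$ subspaces one compares, via the ground state representation and the monotonicity $Q_0\geq Q_1\geq0$ of Legendre functions (Lemma \ref{comp}), with the scalar operator $\sqrt{-\Delta}-\mathcal C\,|x|^{-1}$ at the \emph{smaller} coupling $\mathcal C$, to which \eqref{eq:hardyrem} applies; on the orthogonal complement one uses that $\sqrt{-\Delta}\geq\tilde{\mathcal C}'\,\Phi(|x|^{-1})$ with some $\tilde{\mathcal C}'>\tilde{\mathcal C}$. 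Then the non-local potential $\Phi(v)=\tfrac12(v+hvh)$ is handled by $N(-\tau,\tfrac12(A+B))\leq N(-\tau,A)+N(-\tau,B)$, which costs a factor $4$, and one runs the argument of the proof of Theorem \ref{hlt}. None of these steps appear in your proposal, so as it stands the proof has a fatal gap rather than a fixable roughness.
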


For the proof of this theorem we need some facts about the partial wave decomposition of the operator $\sqrt{-\Delta} - \tilde{\mathcal C} \Phi(|x|^{-1})$ from \cite{EvPeSi}. This operator commutes with the total angular momentum operator $\mathbf J=\mathbf L+\frac12{\bf\sigma}$, where $\mathbf L=-i\nabla\times x$, as well as with the operator $\mathbf L^2$. The subspace corresponding to total angular momentum $j=1/2$ is of the form $\mathfrak H_{1/2,0} \oplus \mathfrak H_{1/2,1}$, where the subspaces $\mathfrak H_{1/2,l}$ correspond to the eigenvalues $l(l+1)$ of $\mathbf L^2$.

The next result, essentially contained in \cite{FrSiWa}, says that on the space $\mathfrak H_{1/2,0} \oplus \mathfrak H_{1/2,1}$ the operator $\sqrt{-\Delta} - \tilde{\mathcal C} \Phi(|x|^{-1})$ is controlled by the operator $\sqrt{-\Delta} - \mathcal C |x|^{-1}$ with the \emph{smaller} coupling constant $\mathcal C$. (Strictly speaking, the latter operator should be tensored with $1_{\C^2}$, but we suppress this if there is no danger of confusion.)

\begin{lemma}\label{comp}
 If $0\not\equiv\psi\in\mathfrak H_{1/2,0}\cap C_0^\infty(\R^3,\C^2)$, then
\begin{equation*}\label{eq:comp1}
 \frac2{1+(2/\pi)^2} \geq 
\frac{\left(\psi, \left(\sqrt{-\Delta} - \tilde{\mathcal C} \Phi(|x|^{-1})\right)\psi \right)}{\left(\psi, \left(\sqrt{-\Delta} - \mathcal C |x|^{-1}\right)\psi \right)}
\geq \frac1{1+(2/\pi)^2} \,.
\end{equation*}
If $0\not\equiv\psi\in\mathfrak H_{1/2,1}\cap C_0^\infty(\R^3,\C^2)$, this bound is true provided $\left(\psi, \left(\sqrt{-\Delta} - \mathcal C |x|^{-1}\right)\psi \right)$ is replaced by $\left(h\psi, \left(\sqrt{-\Delta} - \mathcal C |x|^{-1}\right)h\psi \right)$.
\end{lemma}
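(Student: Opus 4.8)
The plan is to reduce Lemma~\ref{comp} to a one-dimensional computation via the partial-wave decomposition of \cite{EvPeSi}. On each sector $\mathfrak{H}_{1/2,0}$ and $\mathfrak{H}_{1/2,1}$ the operators involved become, after the appropriate unitary identification, explicit operators acting on functions on the half-line $(0,\infty)$. First I would recall from \cite{EvPeSi} (or \cite{FrSiWa}) the precise form of $\sqrt{-\Delta}-\tilde{\mathcal C}\,\Phi(|x|^{-1})$ restricted to $\mathfrak{H}_{1/2,0}\oplus\mathfrak{H}_{1/2,1}$: it is block-diagonalized (after separating the angular variables and using the helicity structure) into the radial momentum operator plus a Coulombic term whose coefficient, on $\mathfrak{H}_{1/2,0}$ and on $\mathfrak{H}_{1/2,1}$ with $\psi$ replaced by $h\psi$, is governed by the same Mellin-multiplier kernel that appears in the Herbst/Kato inequality. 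The key point is that both quadratic forms in the claimed ratio, restricted to the relevant sector, are of the form $\int_0^\infty \big(|\xi|-c\,K(\log\xi)\big)|\widetilde\psi(\xi)|^2$-type expressions, where $K$ is the \emph{same} convolution kernel for numerator and denominator (with different constants $\tilde{\mathcal C}$ versus $\mathcal C$), and crucially the potential term is \emph{positive} on these sectors, so that subtracting it only decreases the form.

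The core estimate is then the following: on the sector $\mathfrak{H}_{1/2,0}$ (resp. $\mathfrak{H}_{1/2,1}$ after applying $h$), one has pointwise in Mellin/momentum space the two-sided bound
\begin{equation*}
\frac{1}{1+(2/\pi)^2}\,\big(|\xi|-\mathcal C\,\kappa(\xi)\big)
\;\le\;
|\xi|-\tilde{\mathcal C}\,\kappa(\xi)
\;\le\;
\frac{2}{1+(2/\pi)^2}\,\big(|\xi|-\mathcal C\,\kappa(\xi)\big),
\end{equation*}
where $\kappa(\xi)\ge 0$ is the symbol of the (positive) Coulomb term on that sector. Since $1/(1+(2/\pi)^2)$ and $2/(1+(2/\pi)^2)$ straddle $1$, and since $\tilde{\mathcal C}=\dfrac{2}{2/\pi+\pi/2}>\mathcal C=2/\pi$, the left inequality amounts to $\tilde{\mathcal C}\le 2/(1+(2/\pi)^2)\cdot\mathcal C+\text{(correction)}$ and the right to $\tilde{\mathcal C}\ge \mathcal C$ together with the uniform lower bound $|\xi|\ge \mathcal C\,\kappa(\xi)$ coming from \eqref{eq:hardy}/Herbst on that sector. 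Concretely, I would write $\tilde{\mathcal C}=\lambda_+\mathcal C$ and $\mathcal C=\lambda_-\mathcal C$-free bookkeeping and check that $\frac{2}{1+(2/\pi)^2}=\frac{2\pi}{\pi+\,2\cdot(2/\pi)}\cdot\frac{1}{\mathcal C}$-type arithmetic matches $\tilde{\mathcal C}$ exactly at one endpoint; the relation $\tilde{\mathcal C}\cdot(1+(2/\pi)^2)/2 = \dfrac{1+(2/\pi)^2}{2/\pi+\pi/2}=\dfrac{1+(2/\pi)^2}{(\pi/2)(1+(2/\pi)^2)}=2/\pi=\mathcal C$ is precisely what makes the right-hand inequality an equality of constants, while the left-hand inequality with constant $1/(1+(2/\pi)^2)$ is weaker and follows by monotonicity in the coupling constant together with $\kappa(\xi)\le c_0|\xi|$ for the sharp Herbst constant. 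Integrating these pointwise bounds against $|\widetilde\psi(\xi)|^2$ and undoing the partial-wave decomposition yields the lemma.

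The main obstacle I anticipate is bookkeeping the partial-wave reduction correctly: one must verify that on $\mathfrak{H}_{1/2,0}$ the operator $\sqrt{-\Delta}-\tilde{\mathcal C}\,\Phi(|x|^{-1})$ really does reduce to a scalar half-line operator comparable to the reduction of $\sqrt{-\Delta}-\mathcal C|x|^{-1}$ on the same angular channel, and that on $\mathfrak{H}_{1/2,1}$ the helicity $h$ intertwines the two reductions in exactly the way that replaces $\psi$ by $h\psi$ in the denominator — this is where the structure of $\Phi$, built from $\binom{1}{h}$, is used. Once the correct scalar kernels are in hand (they are the standard Mellin multipliers $\frac{\Gamma(\tfrac14\pm\tfrac{i\xi}{2}+\dots)}{\dots}$ from \cite{EvPeSi,FrSiWa}), the inequality itself is an elementary monotonicity statement about these multipliers and the three constants $\mathcal C$, $\tilde{\mathcal C}$, and $1+(2/\pi)^2$. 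I would lean on \cite{FrSiWa} for the explicit multiplier computations rather than redo them, since the excerpt already attributes the essential content of the lemma to that reference, and present the proof as: (i) recall the partial-wave forms; (ii) observe the Coulomb terms are positive on these two sectors; (iii) prove the two-sided pointwise symbol bound; (iv) integrate.
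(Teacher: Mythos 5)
Your proposal correctly identifies the partial-wave reduction, the role of $h$ in intertwining the $l=0$ and $l=1$ channels, and the arithmetic identity $\tilde{\mathcal C}/\mathcal C=2/(1+(2/\pi)^2)$, but the central step --- the ``pointwise symbol bound'' --- does not work as stated, for two reasons. First, the two potentials do \emph{not} reduce to a common multiplier $\kappa(\xi)$ with different coupling constants: on the $j=1/2$ subspace the operator $\Phi(|x|^{-1})$ reduces to an integral operator on the half-line with kernel proportional to $\tfrac12\bigl(Q_0+Q_1\bigr)$ (an average over the two orbital channels mixed by the helicity), whereas $|x|^{-1}$ on $\mathfrak H_{1/2,0}$ (resp.\ applied to $h\psi$ for the $l=1$ case) has kernel $Q_0$. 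The factor of $2$ separating the upper and lower bounds in the lemma comes precisely from $\tfrac12 Q_0\le\tfrac12(Q_0+Q_1)\le Q_0$, not from the coupling constants. Second, even granting a multiplier picture, your left-hand inequality requires (after rearranging, with $c=2/\pi$) the bound $\kappa(\xi)\le c\,|\xi|$, while the sharp Herbst inequality only gives $\kappa(\xi)\le c^{-1}|\xi|$; since both couplings are critical, Herbst is saturated on exactly these sectors and there is no slack left to absorb the error term. A direct ``kinetic minus potential'' comparison therefore cannot close, on either side.

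The missing ingredient is the \emph{ground state representation} of \cite[Lemma 2.6]{FrSiWa}, which is what the paper uses. Writing $\hat\psi(\xi)=|\xi|^{-2}g(|\xi|)\Omega_{1/2,l,m}(\xi/|\xi|)$, both quadratic forms --- kinetic and potential terms combined --- become manifestly positive double integrals
\begin{equation*}
\frac{\tilde{\mathcal C}}{2\pi}\iint |g(p)-g(q)|^2\,\tilde k\bigl(\tfrac12(\tfrac pq+\tfrac qp)\bigr)\,\frac{dp}{p}\frac{dq}{q}
\quad\text{and}\quad
\frac{\mathcal C}{2\pi}\iint |g(p)-g(q)|^2\, k\bigl(\tfrac12(\tfrac pq+\tfrac qp)\bigr)\,\frac{dp}{p}\frac{dq}{q}
\end{equation*}
with $\tilde k=\tfrac12(Q_0+Q_1)$ and $k=Q_0$. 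Once the forms are in this shape, the two-sided bound is an immediate pointwise comparison of the kernels $\tilde{\mathcal C}\tilde k$ and $\mathcal C k$, using $Q_0\ge Q_1\ge0$ and $\tilde{\mathcal C}=\tfrac{2}{1+(2/\pi)^2}\,\mathcal C$. You should replace steps (ii)--(iii) of your outline by this representation; the rest of your reduction (step (i) and the role of $h$ on $\mathfrak H_{1/2,1}$) is consistent with the paper's argument.
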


\begin{proof}[Proof of Lemma \ref{comp}]
We prove the assertion only for $l=1$ since the lower bound for $l=0$ is contained in \cite[Lemma 2.7]{FrSiWa} and the upper bound is proved as below. By orthogonality we may assume that the Fourier transform of $\psi$ is of the form $\hat\psi(\xi) = |\xi|^{-2} g(|\xi|) \Omega_{1/2,1,m}(\frac{\xi}{|\xi|})$ where $m\in\{-1/2,1/2\}$ and $\Omega_{1/2,1,m}$ are explicit functions in $L_2(\Sph^2,\C^2)$. By the properties of these functions one has $\widehat{h\psi}(\xi) = - |\xi|^{-2} g(|\xi|) \Omega_{1/2,0,m}(\frac{\xi}{|\xi|})$. According to the ground state representation \cite[Lem\-ma 2.6]{FrSiWa} one has
\begin{align*}
\left(\psi, \left(\sqrt{-\Delta} - \tilde{\mathcal C} \Phi(|x|^{-1})\right)\psi \right)
& = \frac{\tilde{\mathcal C}}{2\pi} \int_0^\infty \int_0^\infty |g(p)-g(q)|^2 \tilde k(\tfrac12(\tfrac pq +\tfrac qp)) \frac{dp}p \, \frac{dq}{q} \,, \\
\left(h\psi, \left(\sqrt{-\Delta} - \mathcal C |x|^{-1}\right)h\psi \right)
& = \frac{\mathcal C}{2\pi} 
\int_0^\infty \int_0^\infty |g(p)-g(q)|^2 k(\tfrac12(\tfrac pq +\tfrac qp)) \frac{dp}p \, \frac{dq}{q} \,,
\end{align*}
where $\tilde k(t)= \frac12 (Q_0(t)+Q_1(t))$, $k(t)=Q_0(t)$, and $Q_l$ are the Legendre functions of the second kind \cite[8.4]{AbSt}. The assertion now follows from the fact that $Q_0\geq Q_1\geq 0$.
\end{proof}

\begin{proof}[Proof of Theorem \ref{hltbr}]
We first claim that for any $0<t<1/2$ there is a $\tilde K_t>0$ such that
\begin{equation}\label{eq:hardyrembr}
 \sqrt{-\Delta} - \tilde{\mathcal C} \Phi(|x|^{-1})
\geq \tilde K_t l^{-1+2t} (-\Delta)^t - l^{-1} \,,
\quad l>0\,.
\end{equation}
Indeed, it follows from Lemma \ref{comp} and \eqref{eq:hardyrem} that on $\mathfrak H_{1/2,0}\oplus\mathfrak H_{1/2,1}$ one has for any $0<t<1/2$
$$
\sqrt{-\Delta} - \tilde{\mathcal C} \Phi(|x|^{-1}) 
\geq \left(1+(2/\pi)^2\right)^{-1} \left( K_t l^{-1+2t} (-\Delta)^t - l^{-1} \right) \,,
\quad l>0 \,.
$$
On the other hand, the arguments of \cite{EvPeSi} show that there exists a constant $\tilde{\mathcal C}'>\tilde{\mathcal C}$ such that $\sqrt{-\Delta} \geq \tilde{\mathcal C}'\Phi(|x|^{-1})$ on 
$\left(\mathfrak H_{1/2,0}\oplus\mathfrak H_{1/2,1}\right)^\bot$. Hence on that space
$$
\sqrt{-\Delta} - \tilde{\mathcal C} \Phi(|x|^{-1}) 
\geq \frac{\tilde{\mathcal C}'-\tilde{\mathcal C}}{\tilde{\mathcal C}'} \sqrt{-\Delta}
\geq \frac{\tilde{\mathcal C}'-\tilde{\mathcal C}}{\tilde{\mathcal C}'} 
\left( \frac 1{2t} l^{-1+2t} (-\Delta)^t -  \frac{1-2t}{2t} l^{-1} \right) \,,
\quad l>0 \,.
$$
This proves \eqref{eq:hardyrembr}.

Given \eqref{eq:hardyrembr}, the proof of \eqref{eq:hltbr} is similar to that of \eqref{eq:hlt}. We may assume that $V(x)=v(x) I_{\C^4}$ for a non-negative, \emph{scalar} function $v$ (otherwise, replace $V(x)$ by $v(x) I_{\C^4}$ where $v(x)$ is the operator norm of the $4\times 4$ matrix $V(x)_+$). For a given $l>0$ and $0<t<1/2$ we introduce the operator $H:= \tilde K_t l^{-1+2t} (-\Delta)^t - v -l^{-1}$ in $L_2(\R^3,\C)$. Then according to \eqref{eq:hardyrembr} one has
\begin{equation*}\label{eq:scalar}
N(-\tau, \sqrt{-\Delta} - \tilde{\mathcal C} \Phi(|x|^{-1}) -\Phi(V))
\leq N(-\tau, \tfrac12 (H\otimes 1_{\C^2} +h (H\otimes 1_{\C^2}) h))
\leq 4 N(-\tau, H) \,.
\end{equation*}
In the last inequality we used that $N(-\tau, \frac12(A+B))\leq N(-\tau,A) + N(-\tau,B)$ for any self-adjoint, lower semi-bounded operators $A$ and $B$, which follows from the variational principle. Now one can proceed in the same way as in the proof of \eqref{eq:hlt}.
\end{proof}


\bibliographystyle{amsalpha}

\end{document}